\newtheorem{theorem}{Theorem}
\newtheorem{lemma}[theorem]{Lemma}
\newtheorem{corollary}[theorem]{Corollary}
\newtheorem{problem}[theorem]{Problem}
\newtheorem{hypothesis}{Hypothesis}
\newcommand{\descr}[1]{\emph{\textbf{(#1).}}}
\newcommand{\eps}{\varepsilon}
\DeclareMathOperator{\poly}{poly}
\newcommand{\Fr}{Fr\'echet }
\begin{document}
\title{Fine-Grained Complexity Theory: \\Conditional Lower Bounds for Computational Geometry}
%
%
\author{Karl Bringmann\thanks{Saarland University and Max Planck Institute for Informatics,  Saarland Informatics Campus, Germany,
\url{bringmann@cs.uni-saarland.de} This work is part of the project TIPEA that has received funding from the European Research Council (ERC) under the European Unions Horizon 2020 research and innovation programme (grant agreement No.\ 850979).}}
%
%
%
\maketitle              
\begin{abstract}

Fine-grained complexity theory is the area of theoretical computer science that proves conditional lower bounds based on the Strong Exponential Time Hypothesis and similar conjectures. This area has been thriving in the last decade, leading to conditionally best-possible algorithms for a wide variety of problems on graphs, strings, numbers etc. 

This article is an introduction to fine-grained lower bounds in computational geometry, with a focus on lower bounds for polynomial-time problems based on the Orthogonal Vectors Hypothesis. Specifically, we discuss conditional lower bounds for nearest neighbor search under the Euclidean distance and \Fr distance.

\end{abstract}

\section{Introduction}

The term \emph{fine-grained complexity theory} was coined in the last decade to describe the area of theoretical computer science that proves conditional lower bounds on the time complexity of algorithmic problems, assuming some hypothesis. The goal is to explain the computational complexity of many different problems based on a small number of core barriers. The general approach dates back to the introduction of 3SUM-hardness in '95~\cite{GajentaanO95} (or even to the introduction of NP-hardness, depending on the interpretation). The last decade has seen several new hypotheses and a wealth of new techniques for proving conditional lower bounds, leading to a large body of literature on the topic, see the surveys~\cite{williams2018some,Bringmann19}.
In this article we give a self-contained introduction to recent fine-grained complexity results in the area of computational geometry. Instead of the most technically deep results, we focus on simple techniques that can be easily transferred to other problems. Moreover, we focus on lower bounds for polynomial-time problems.

\medskip
The basic setup of fine-grained lower bounds is similar to classic NP-hardness reductions: A fine-grained reduction from problem $P$ to problem $Q$ is an algorithm that given an instance $I$ of size $n$ for problem $P$ computes in time $r(n)$ an equivalent instance $J$ of size $s(n)$ for problem $Q$.\footnote{What we sketched here is a \emph{many-one} reduction, since each instance of $P$ is reduced to one instance of $Q$. One can also consider \emph{Turing} reductions, where the reduction algorithm is allowed to make several calls to an oracle for $Q$. See \cite[Definition~1]{CarmosinoGIMPS16} for the formal definition of (Turing-style) fine-grained reductions.} Thus, if there is an algorithm solving problem $Q$ in time $T(n)$, by this reduction there is an algorithm solving problem $P$ in time $r(n) + T(s(n))$. In particular, if $r(n) + T(s(n))$ is faster than the hypothesized optimal time complexity of problem $P$, then problem $Q$ cannot be solved in time $T(n)$ assuming the hypothesis for $P$. We will see several concrete examples of this argumentation throughout this article. 


\subsection{Hardness Hypotheses}

Let us discuss the three main hypotheses used in computational geometry.

\subsubsection{3SUM Hypothesis}

In the 3SUM problem, given $n$ integers, we want to decide whether any three of them sum to 0. The 3SUM Hypothesis postulates that the classic $O(n^2)$-time algorithm for 3SUM cannot be improved to time $O(n^{2-\eps})$ for any $\eps > 0$. This hypothesis was introduced in '95 in a seminal work by Gajentaan and Overmars~\cite{GajentaanO95}, making computational geometry a pioneer in fine-grained complexity theory.
We refer to \cite{williams2018some} for an overview of lower bounds based on the 3SUM Hypothesis; in this introduction we focus on other hypotheses.

\subsubsection{Strong Exponential Time Hypothesis}

The strongest new impulse for conditional lower bounds in the last two decades was the introduction of the Strong Exponential Time Hypothesis.
This hypothesis concerns the fundamental $k$-SAT problem: Given a formula $\phi$ in conjunctive normal form of width $k$ on $n$ variables and $m$ clauses, decide whether $\phi$ is satisfiable. Naively this problem can be solved in time $O(2^n m)$. Improved algorithms solve $k$-SAT in time $O(2^{(1-\eps_k)n})$ for some constant $\eps_k > 0$, but for all known algorithms the constant $\eps_k$ tends to~0 for $k \to \infty$. This lead Impagliazzo and Paturi~\cite{ImpagliazzoP01} to postulate the following:

\begin{hypothesis} \descr{Strong Exponential Time Hypothesis -- SETH}
  For any $\eps > 0$, there exists $k \ge 3$ such that $k$-SAT on formulas with $n$ variables cannot be solved in time $O(2^{(1-\eps)n})$.
\end{hypothesis}

This has become the most standard hypothesis in fine-grained complexity theory~\cite{williams2018some}, and it has been used to prove tight lower bounds for a wide variety of problems, see, e.g.,~\cite{AbboudBW15,AbboudW14,Bringmann14,%
BringmannK15,BringmannK18,BringmannKN19,BringmannM16,%
BringmannN21,buchin2016fine,%
BuchinOS19,Chen20,Rubinstein18,Williams18}.

%
%

\subsubsection{Orthogonal Vectors Hypothesis}

In the Orthogonal Vectors problem (OV), given sets of Boolean vector $A,B \subseteq \{0,1\}^d$ of size $n$, we ask whether there exists a pair $(a,b) \in A \times B$ that is orthogonal, that is, $\langle a,b \rangle = \sum_{i=1}^d a_i \cdot b_i = 0$. Naively this problem can be solved in time $O(n^2 d)$. For small dimension $d = O(\log n)$ there are improved algorithms~\cite{AbboudWY15}, but for $\omega(\log n) \le d \le n^{o(1)}$ no algorithm running in time $O(n^{2-\eps})$ is known. 
This barrier is formalized as follows.

\begin{hypothesis} \descr{OV Hypothesis -- OVH~\cite{Williams05}} \label{hyp:OV}
  For any $\eps > 0$, OV cannot be solved in time $O(n^{2-\eps} \poly(d))$.
\end{hypothesis}

Note that for $d = n^{\Omega(1)}$ we can naively solve OV in time $O(n^2 d) = \poly(d) = O(n^{2-\eps} \poly(d))$, and thus OVH does not apply. Indeed, the hypothesis only asserts that there \emph{exists} a dimension $d = d(n)$ such that OV cannot be solved in time $O(n^{2-\eps} \poly(d))$; this dimension $d$ must be of the form $\omega(\log n) \le d \le n^{o(1)}$. 

OVH has been used to prove tight conditional lower bounds for a wide range of problems, see, e.g.,~\cite{AbboudBW15,Bringmann14,BringmannK15,%
BringmannK18,BringmannM16,BringmannN21,BuchinOS19,Williams18}.
It is known that OVH is at least as believable as SETH, because SETH implies OVH~\cite{Williams05}. 

%

\medskip
In this article we focus on lower bounds based on OVH (since SETH implies OVH this also yields lower bounds based on SETH). Specifically, in Section~\ref{sec:nearest} we consider nearest neighbor search, and in Section~\ref{sec:frechet} we discuss curve similarity.

\section{Nearest Neighbor Search}
\label{sec:nearest}

A fundamental problem of computer science is to compute the nearest neighbor of a point $q \in \mathbb{R}^d$ among a set of points $P \subset \mathbb{R}^d$, that is, to determine the point $p \in P$ minimizing the Euclidean distance $\|p-q\|$.
This has an abundance of applications such as pattern recognition, spell checking, or coding theory. 
These applications often come in the form of a data structure problem, where we can first preprocess $P$ to build a data structure that can then quickly answer nearest neighbor queries. 
Naively, a nearest neighbor query can be answered in time $O(nd)$, where $n$ is the number of points in the data set $P$. Improved algorithms exist in small dimensions, for example k-d-trees have a worst-case query time of $O(d\cdot n^{1-1/d})$~\cite{LeeW77}.
However, already for a large constant dimension $d \ge 1/\eps$ this query time is essentially linear, specifically it is $\Omega(n^{1-\eps})$. We can thus ask: 

\begin{center}
  \emph{Does high-dimensional nearest neighbor search require near-linear query time?}
\end{center}

In the following we answer this question affirmatively assuming OVH.
To connect nearest neighbor search to the OV problem we make use of the following embedding, which maps Boolean vectors to points in $\mathbb{R}^d$ such that from the points' Euclidean distance we can read off whether the vectors are orthogonal.

\begin{lemma}[Embedding Orthogonality into Euclidean Distance] \label{lem:embedEuclid}
  There are functions $\mathcal{A}, \mathcal{B} \colon \{0,1\}^d \mapsto \mathbb{R}^d$ and a threshold $\tau$ such that $\langle a,b \rangle = 0$ if and only if $\| \mathcal{A}(a) - \mathcal{B}(b) \| \le \tau$ for any $a,b \in \{0,1\}^d$. The functions $\mathcal{A},\mathcal{B}$ and the threshold $\tau$ can be evaluated in time $O(d)$.
\end{lemma}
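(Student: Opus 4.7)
The plan is to find simple coordinatewise embeddings $\mathcal{A}(a)_i = f(a_i)$ and $\mathcal{B}(b)_i = g(b_i)$ for some $f, g : \{0,1\} \to \mathbb{R}$, so that the squared distance $\|\mathcal{A}(a) - \mathcal{B}(b)\|^2 = \sum_{i=1}^{d} (f(a_i) - g(b_i))^2$ depends only on the inner product $c := \langle a, b\rangle$ (and on $d$), and not on the individual Hamming weights $|a|$ and $|b|$. If such $f, g$ exist, the squared distance will be of the form $\alpha d + \beta c$, and a single threshold $\tau$ cleanly separates $c = 0$ from $c \ge 1$.

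First, I would classify the $d$ coordinates by the value $(a_i, b_i) \in \{0,1\}^2$. Writing the class counts as $N_{11} = c$, $N_{10} = |a| - c$, $N_{01} = |b| - c$, and $N_{00} = d - |a| - |b| + c$, the squared distance becomes $\sum_{u,v \in \{0,1\}} N_{uv} (f(u)-g(v))^2$. For this expression to be independent of $|a|$ and $|b|$, the three values $(f(0)-g(0))^2$, $(f(0)-g(1))^2$, $(f(1)-g(0))^2$ must all coincide, so that the coefficients of $|a|$ and $|b|$ in the expansion vanish. The only nontrivial task is then to exhibit such $f, g$ while also arranging that $(f(1)-g(1))^2$ differs from that common value.

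Second, I would verify the concrete choice $f(0) = 0$, $f(1) = 2$, $g(0) = 1$, $g(1) = -1$, that is, $\mathcal{A}(a) = 2a$ and $\mathcal{B}(b) = \mathbf{1} - 2b$. Since $\mathcal{A}(a)_i - \mathcal{B}(b)_i = 2(a_i + b_i) - 1$, a routine case check shows $(f(a_i) - g(b_i))^2 = 1$ whenever $(a_i, b_i) \ne (1,1)$, and $(f(1) - g(1))^2 = 9$. Summing over all $d$ coordinates yields $\|\mathcal{A}(a) - \mathcal{B}(b)\|^2 = (d - c) + 9c = d + 8c$. Setting $\tau := \sqrt{d}$, the condition $c = 0$ becomes equivalent to $\|\mathcal{A}(a) - \mathcal{B}(b)\|^2 \le d$, i.e.\ $\|\mathcal{A}(a) - \mathcal{B}(b)\| \le \tau$. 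Both maps are clearly computable in $O(d)$ time.

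The only conceptual obstacle is the opening move: engineering the embedding so the squared distance depends solely on $\langle a, b\rangle$ and $d$, rather than on the free parameters $|a|$ and $|b|$. Once one observes that this forces three of the four per-coordinate contributions to agree, the concrete choice above (or any affine variant thereof) essentially writes itself, and the threshold and runtime claims follow immediately.
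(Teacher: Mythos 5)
Your proof is correct and takes essentially the same approach as the paper: your embedding $\mathcal{A}(a)_i = 2a_i$, $\mathcal{B}(b)_i = 1-2b_i$ differs from the paper's $p_i = 1+2a_i$, $q_i = 2-2b_i$ only by a uniform additive shift, so the coordinate differences $2(a_i+b_i)-1$, the resulting $\|p-q\|^2 = d + 8\langle a,b\rangle$, and the threshold $\tau = \sqrt{d}$ are identical. Your preliminary ``design'' discussion of why three of the four per-coordinate contributions must agree is a nice motivation that the paper omits, but the construction and verification are the same.
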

\begin{proof}
  For any $a \in \{0,1\}^d$ we construct $p := \mathcal{A}(a)$ by setting $p_i := 1 + 2 a_i$ for any $1 \le i \le d$. Similarly, for any $b \in \{0,1\}^d$ we construct $q := \mathcal{B}(b)$ by setting $q_i := 2 - 2b_i$. Note that $|p_i - q_i| = |2(a_i+b_i)-1|$, which evaluates to 3 if $a_i=b_i=1$ and to 1 otherwise. Therefore, we obtain
  \[ \|p-q\| = \Big(\sum_{i=1}^d |p_i-q_i|^2\Big)^{1/2} = \big(3^2 \cdot \langle a,b \rangle + 1^2 \cdot (d - \langle a,b \rangle)\big)^{1/2} = (d + 8 \langle a,b \rangle)^{1/2}. \]
  Setting $\tau := d^{1/2}$ yields $\|p-q\| = \|\mathcal{A}(a) - \mathcal{B}(b)\| \le \tau$ if and only if $\langle a,b \rangle = 0$.
\end{proof}

\subsection{Bichromatic Closest Pair}

We use the above embedding to prove a conditional lower bound for the the Bichromatic Closest Pair problem, an offline variant of nearest neighbor search:

\begin{problem}[Bichromatic Closest Pair]
  Given sets $P,Q \subset \mathbb{R}^d$ of size $n$, compute the pair $(p,q) \in P \times Q$ minimizing the Euclidean distance $\|p-q\|$. 
\end{problem}

\noindent
Bichromatic Closest Pair cannot be solved in time $O(n^{2-\eps} \poly(d))$ under OVH:

\begin{theorem}[Lower Bound for Bichromatic Closest Pair~\cite{AlmanW15}] \label{thm:bcp}
  For any $\eps > 0$, Bichromatic Closest Pair cannot be solved in time $O(n^{2-\eps} \poly(d))$, unless OVH fails. 
\end{theorem}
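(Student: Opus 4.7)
The plan is to give a direct fine-grained reduction from OV to Bichromatic Closest Pair, using the embedding of Lemma~\ref{lem:embedEuclid} as a black box. Given an OV instance with vector sets $A,B \subseteq \{0,1\}^d$ of size $n$, I would apply the embedding pointwise: set $P := \{\mathcal{A}(a) : a \in A\}$ and $Q := \{\mathcal{B}(b) : b \in B\}$, producing a Bichromatic Closest Pair instance on two sets of $n$ points in $\mathbb{R}^d$. Since $\mathcal{A}$ and $\mathcal{B}$ are evaluable in $O(d)$ time, this construction runs in $O(nd)$.

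Next I would argue that solving BCP on $(P,Q)$ decides OV. Let $(p^\star, q^\star)$ be an optimal pair returned by BCP, and write $p^\star = \mathcal{A}(a^\star)$, $q^\star = \mathcal{B}(b^\star)$. By Lemma~\ref{lem:embedEuclid}, for every $a \in A$ and $b \in B$ we have $\langle a,b \rangle = 0$ iff $\|\mathcal{A}(a) - \mathcal{B}(b)\| \le \tau$. Hence there exists an orthogonal pair in $A \times B$ iff $\min_{(a,b) \in A \times B} \|\mathcal{A}(a) - \mathcal{B}(b)\| \le \tau$, which is precisely the condition $\|p^\star - q^\star\| \le \tau = d^{1/2}$. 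Thus a single comparison against $\tau$ converts the BCP answer into the OV answer, and if OV is a yes-instance the optimal pair is even witnessed by $(a^\star, b^\star)$.

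Finally I would do the time accounting and derive the contradiction. Suppose, for contradiction, that BCP admits an algorithm running in $O(N^{2-\eps}\poly(D))$ on instances of $N$ points in dimension $D$, for some $\eps > 0$. Plugging in $N = n$ and $D = d$, the composed procedure solves OV in time $O(nd) + O(n^{2-\eps}\poly(d)) = O(n^{2-\eps}\poly(d))$, which contradicts OVH (Hypothesis~\ref{hyp:OV}).

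I do not foresee a genuine obstacle: the embedding of Lemma~\ref{lem:embedEuclid} already encapsulates the whole geometric content of the reduction, and BCP as stated returns the optimum pair, which is strictly stronger than deciding the threshold $\tau$. The only points worth spelling out in a careful write-up are that the embedded coordinates are bounded integers (so bit-complexity issues are harmless) and that the reduction preserves the polynomial dependence on $d$, which is the parameter that OVH quantifies over.
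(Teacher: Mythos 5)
Your proposal is correct and takes essentially the same route as the paper: reduce OV to Bichromatic Closest Pair by applying the embedding of Lemma~\ref{lem:embedEuclid} pointwise to $A$ and $B$, then compare the returned optimum distance against the threshold $\tau$. You merely spell out the $O(nd)$ construction cost and the time accounting more explicitly than the paper does.
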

\begin{proof}
  We reduce from OV to Bichromatic Closest Pair using the embedding from Lemma~\ref{lem:embedEuclid}. Given an OV instance $(A,B)$ of size $n$ in dimension $d$, we construct the point sets $P = \{\mathcal{A}(a) \mid a \in A\}$ and $Q := \{\mathcal{B}(b) \mid b \in B\}$. By Lemma~\ref{lem:embedEuclid}, the bichromatic closest pair of $(P,Q)$ has distance $\le \tau$ if and only if there exists an orthogonal pair of vectors. Thus, a solution to the constructed Bichromatic Closest Pair instance solves the given OV instance. Since $n$ and $d$ do not change, the running time lower bound is immediate from OVH (Hypothesis~\ref{hyp:OV}).
\end{proof}

\subsection{Nearest Neighbor Data Structures}

Now we consider the data structure version of nearest neighbor search.

\begin{problem}[Nearest Neighbor Data Structure]
  In the preprocessing we are given a set $P \subset \mathbb{R}^d$ of size $n$ and we build a data structure. The data structure allows to answer nearest neighbor queries: Given a point $q \in \mathbb{R}^d$, compute the point $p \in P$ minimizing the Euclidean distance $\|p-q\|$.
\end{problem}

Observe that any nearest neighbor data structure also solves the Bichromatic Closest Pair Problem, by building the data structure for $P$ and then querying every $q \in Q$. If the data structure has preprocessing time $T_P(n,d)$ and query time $T_Q(n,d)$, then this solves Bichromatic Closest Pair in time $T_P(n,d) + n\cdot T_Q(n,d)$. 
Theorem~\ref{thm:bcp} thus implies that Bichromatic Closest Pair cannot be solved with preprocessing time $O(n^{2-\eps} \poly(d))$ and query time $O(n^{1-\eps} \poly(d))$:

\begin{corollary}[Lower Bound for Nearest Neighbor Data Structures I]
  For any $\eps > 0$, there is no nearest neighbor data structure with preprocessing time $O(n^{2-\eps} \poly(d))$ and query time $O(n^{1-\eps} \poly(d))$, unless OVH fails.
\end{corollary}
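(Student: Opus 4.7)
The plan is to obtain this corollary as a direct consequence of Theorem~\ref{thm:bcp} by contrapositive, using precisely the ``build once, query many times'' reduction sketched in the paragraph immediately preceding the statement. Concretely, I would assume for contradiction that, for some $\eps > 0$, there is a nearest neighbor data structure with preprocessing time $T_P(n,d) = O(n^{2-\eps}\poly(d))$ and query time $T_Q(n,d) = O(n^{1-\eps}\poly(d))$, and then describe an algorithm for Bichromatic Closest Pair: given input sets $P,Q \subset \mathbb{R}^d$ of size $n$, run the preprocessing on $P$, then for each $q \in Q$ issue a nearest neighbor query to obtain some $p_q \in P$, and finally output the pair $(p_q,q)$ minimizing $\|p_q - q\|$ over $q \in Q$. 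Correctness is immediate: the bichromatic closest pair $(p^*,q^*)$ satisfies $p^* = \arg\min_{p\in P}\|p-q^*\|$, so it is among the candidate pairs enumerated.

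For the running time I would sum the preprocessing cost $T_P(n,d)$, the $n$ query costs $n\cdot T_Q(n,d)$, and the $O(nd)$ time needed to evaluate and compare the $n$ candidate distances. Plugging in the assumed bounds gives
\[ O(n^{2-\eps}\poly(d)) \;+\; n\cdot O(n^{1-\eps}\poly(d)) \;+\; O(nd) \;=\; O(n^{2-\eps}\poly(d)), \]
which contradicts Theorem~\ref{thm:bcp} and hence refutes OVH. I do not foresee a genuine obstacle here, since the reduction is essentially a black-box use of the data structure; the only point worth making explicit is that the data structure must accept arbitrary query points $q \in \mathbb{R}^d$ rather than only points from a pre-declared set, which is precisely the model in the problem definition and is exactly what we need, as the queries in the reduction are the embedded points $\mathcal{B}(b)$ produced in the proof of Theorem~\ref{thm:bcp}.
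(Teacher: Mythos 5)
Your proof is correct and follows essentially the same route as the paper: build the data structure on $P$, query every $q \in Q$, and sum $T_P(n,d) + n\cdot T_Q(n,d)$ to contradict Theorem~\ref{thm:bcp}. The extra details you include (the $O(nd)$ comparison cost and the explicit correctness argument) are fine but not substantively different from the paper's one-paragraph observation preceding the corollary.
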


It might seem natural that the preprocessing time is limited to $O(n^{2-\eps})$, because from OVH we can prove only quadratic lower bounds. In the following we show that this intuition is wrong, and the above corollary can be improved to rule out \emph{any polynomial} preprocessing time. To this end, we need an unbalanced version of OVH, which shows that the brute force enumeration of all $|A| \cdot |B|$ pairs of vectors is also necessary when $|A| = n^\alpha \ll n = |B|$.
This tool was introduced in~\cite{AbboudW14}; see also~\cite{BringmannK18} for a proof that unbalanced OV and standard OV are equivalent.

\begin{lemma}[\cite{AbboudW14}] \label{lem:UOVH}
  For any $\eps, \alpha \in (0,1)$, OV on instances $(A,B)$ with $|B| = n$ and $|A| = \Theta(n^\alpha)$ cannot be solved in time $O(n^{1+\alpha-\eps} \poly(d))$, unless OVH fails.
\end{lemma}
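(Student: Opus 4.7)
The plan is a standard self-reduction by splitting: assume for contradiction that unbalanced OV with $|B|=n$, $|A|=\Theta(n^\alpha)$ admits an algorithm running in time $O(n^{1+\alpha-\eps}\poly(d))$, and use it to solve balanced OV of size $N$ in time $O(N^{2-\eps'}\poly(d))$, contradicting OVH.

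First I would take an arbitrary balanced OV instance $(A',B')$ with $|A'|=|B'|=N$ in dimension $d$. Observe that $(A',B')$ has an orthogonal pair if and only if some $(a,b)\in A_i\times B'$ is orthogonal, where $A'=A_1\sqcup\dots\sqcup A_k$ is any partition of $A'$. I would choose the partition so that each $|A_i|=\Theta(N^\alpha)$; this gives $k=\Theta(N^{1-\alpha})$ groups. Each subinstance $(A_i,B')$ is an unbalanced OV instance of the correct shape: setting $n:=N$, we have $|B'|=n$ and $|A_i|=\Theta(n^\alpha)$.

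Next I would invoke the hypothesized algorithm on each $(A_i,B')$. By assumption, each call takes time $O(n^{1+\alpha-\eps}\poly(d))=O(N^{1+\alpha-\eps}\poly(d))$. Summing over the $\Theta(N^{1-\alpha})$ subinstances, the total running time is
\[
O\big(N^{1-\alpha}\cdot N^{1+\alpha-\eps}\poly(d)\big)=O\big(N^{2-\eps}\poly(d)\big),
\]
and the original instance has an orthogonal pair iff at least one subinstance does. This contradicts OVH (Hypothesis~\ref{hyp:OV}) on the balanced instance $(A',B')$, completing the reduction.

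There is essentially no hard step: the whole argument is a one-line brute-force splitting of $A'$, and the arithmetic $N^{1-\alpha}\cdot N^{1+\alpha-\eps}=N^{2-\eps}$ is what makes the unbalanced regime equivalent to the balanced one. The only minor point to be careful about is rounding when $N^\alpha$ is not an integer (one takes $|A_i|=\lceil N^\alpha\rceil$ and a possibly smaller last group, which only changes constants), and ensuring that we are in the non-trivial regime $\omega(\log N)\le d\le N^{o(1)}$ promised by OVH, which is preserved verbatim since the dimension is unchanged by the reduction.
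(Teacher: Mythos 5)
Your proof is correct and takes essentially the same approach as the paper: split one of the two sets of a balanced instance into $\Theta(N^{1-\alpha})$ groups of size $\Theta(N^\alpha)$, invoke the hypothesized unbalanced algorithm on each subinstance, and sum the running times to get $O(N^{2-\eps}\poly(d))$. The only cosmetic difference is that you split $A'$ (matching the lemma's naming directly) while the paper splits $B'$ and relies on the symmetry of OV; both are the same argument.
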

\begin{proof}
  Let $(A',B')$ be a balanced instance of OV, that is, $|A'|=|B'|=n$. Split $B'$ into $\Theta(n^{1-\alpha})$ sets $B'_1,\ldots,B'_\ell$ of size $\Theta(n^\alpha)$. Run an unbalanced OV algorithm on each pair $(A',B'_i)$, and note that from the results we can infer whether $(A',B')$ contains an orthogonal pair of vectors.
  If each unbalanced instance can be solved in time $O(n^{1+\alpha-\eps})$, then all $\Theta(n^{1-\alpha})$ unbalanced instances in total can be solved in time $O(n^{2-\eps})$, contradicting OVH.
\end{proof}

With this tool, we can rule out any polynomial preprocessing time $\poly(n,d)$ and query time $O(n^{1-\eps} \poly(d))$ for nearest neighbor search:

\begin{theorem}[Lower Bound for Nearest Neighbor Data Structures II]
  For any $\eps, \beta > 0$, there is no nearest neighbor data structure with preprocessing time $O(n^{\beta} \poly(d))$ and query time $O(n^{1-\eps} \poly(d))$, unless OVH fails.
\end{theorem}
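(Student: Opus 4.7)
The plan is to combine the Euclidean embedding of Lemma~\ref{lem:embedEuclid} with the unbalanced variant of OVH from Lemma~\ref{lem:UOVH}, exploiting the asymmetry between the preprocessing side and the query side of the data structure. The crux is that one can feed the \emph{small} side of an unbalanced OV instance into the preprocessing, so that an expensive polynomial preprocessing cost $n^\beta$ is paid only on the small set and thus remains below the overall unbalanced OV barrier.

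First I would assume for contradiction that such a data structure exists with preprocessing time $O(n^\beta \poly(d))$ and query time $O(n^{1-\eps} \poly(d))$. Given an unbalanced OV instance $(A,B)$ with $|A| = \Theta(n^\alpha)$ and $|B| = n$, where $\alpha \in (0,1)$ will be fixed as a function of $\beta$ and $\eps$, I would apply the embedding of Lemma~\ref{lem:embedEuclid} to produce point sets $P := \{\mathcal{A}(a) : a \in A\}$ and $Q := \{\mathcal{B}(b) : b \in B\}$. Then I would build the nearest neighbor data structure on $P$ (which has $\Theta(n^\alpha)$ points) and, for each $q \in Q$, query the structure and test whether the returned nearest neighbor lies within distance $\tau$. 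By Lemma~\ref{lem:embedEuclid}, this procedure correctly decides whether $(A,B)$ contains an orthogonal pair.

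The running time of the reduction is $O(n^{\alpha\beta} \poly(d))$ for the preprocessing on $P$, plus $O(n \cdot n^{\alpha(1-\eps)} \poly(d)) = O(n^{1+\alpha - \alpha\eps} \poly(d))$ for the $n$ queries. The query term is already of the desired form $O(n^{1+\alpha - \eps'} \poly(d))$ with $\eps' := \alpha\eps > 0$, so the main obstacle and the only real design choice is keeping the preprocessing term $n^{\alpha\beta}$ strictly below the unbalanced OV barrier $n^{1+\alpha}$. This reduces to the elementary inequality $\alpha(\beta - 1 + \eps) < 1$, which one arranges by picking $\alpha$ sufficiently small, e.g., $\alpha := \min\{1/2,\, 1/(2\beta)\}$.

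With such a choice the total running time becomes $O(n^{1+\alpha - \eps''} \poly(d))$ for some $\eps'' > 0$ depending only on $\beta$ and $\eps$, contradicting Lemma~\ref{lem:UOVH} and hence OVH. The only subtlety is that $\alpha$ is chosen after $\beta$ and $\eps$ are fixed, which is exactly the logical order in which the quantifiers of the theorem and of Lemma~\ref{lem:UOVH} appear, so the reduction is valid.
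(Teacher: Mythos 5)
Your proof is correct and takes essentially the same approach as the paper: reduce from unbalanced OV via Lemma~\ref{lem:embedEuclid}, build the data structure on the small set $A$, and query with each point of $B$, then invoke Lemma~\ref{lem:UOVH}. The only difference is cosmetic: the paper sets $\alpha := 1/\beta$ so that the preprocessing time collapses to $O(n\poly(d))$ (implicitly assuming $\beta > 1$, the other case being covered by the earlier corollary), whereas your choice $\alpha := \min\{1/2, 1/(2\beta)\}$ is slightly more conservative but has the advantage of always lying in $(0,1)$ as Lemma~\ref{lem:UOVH} formally requires.
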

\begin{proof}
  Fix $\eps, \beta > 0$ and suppose nearest neighbor can be solved with preprocessing time $O(|P|^{\beta} \poly(d))$ and query time $O(|P|^{1-\eps} \poly(d))$.
  Set $\alpha := 1/\beta$.
  Given an OV instance $(A,B)$ with $|B| = n$ and $|A| = \Theta(n^\alpha)$, we use the embedding from Lemma~\ref{lem:embedEuclid} to construct the sets $P := \{\mathcal{A}(a) \mid a \in A\}$ and $Q := \{\mathcal{B}(b) \mid b \in B\}$. We run the preprocessing of the nearest neighbor data structure on $P$; this takes time $O(|P|^\beta \poly(d)) = O((n^\alpha)^\beta \poly(d)) = O(n \poly(d))$. Then we query the data structure for each $q \in Q$; over all $|Q|$ queries this takes total time 
  \[ O(|Q| \cdot |P|^{1-\eps}\poly(d)) = O(n^{1+\alpha \cdot (1-\eps)} \poly(d)) = O(n^{1+\alpha-\eps'} \poly(d)), \]
  for $\eps' := \alpha \cdot \eps$. By Lemma~\ref{lem:embedEuclid}, some query $q \in Q$ returns a point $p \in P$ within distance $\tau$ if and only if there exists an orthogonal pair of vectors in $A \times B$. We can thus solve unbalanced OV in time $O(n^{1+\alpha-\eps'} \poly(d))$, contradicting Lemma~\ref{lem:UOVH}.
\end{proof}

We have thus shown that high-dimensional nearest neighbor search requires almost-linear query time, even if we allow any polynomial preprocessing time. 

\subsection{Further Results on Nearest Neighbor Search}

Let us discuss some advanced research directions on nearest neighbor search. The proofs here are beyond the scope of this introduction to the topic.

\begin{itemize}
\item \emph{Smaller Dimension:}
The best known query time for nearest neighbor search is of the form $n^{1-\Theta(1/d)}$~\cite{LeeW77}, which is near-linear $n^{1-o(1)}$ for any unbounded dimension $d = \omega(1)$. Recall that OVH asserts hardness for some dimension $\omega(\log n) \le d \le n^{o(1)}$. 
A line of research has tried to close this gap~\cite{Williams18,Chen20}; the current record shows that Theorem~\ref{thm:bcp} already holds in dimension $d = 2^{O(\log^* n)}$~\cite{Chen20}. It remains an important open problem to close the remaining gap and show hardness for any dimension $d = \omega(1)$.

\item \emph{Approximate Nearest Neighbor:}
In many practical applications it suffices to compute nearest neighbors approximately. Note that the OV problem asks whether there is a pair of vectors with $\langle a,b \rangle = 0$ or whether all vectors have $\langle a,b \rangle \ge 1$. Inspecting the proof of Lemma~\ref{lem:embedEuclid}, we see that it is hard to distinguish between Euclidean distance at most $d^{1/2}$ or at least $(d+8)^{1/2}$. This shows hardness of computing a $(d+8)^{1/2} / d^{1/2} = 1 + \Theta(1/d)$ approximation for Bichromatic Closest Pair. 
A big leap forward was made by Rubinstein~\cite{Rubinstein18}, who proved that Theorem~\ref{thm:bcp} even holds for $(1+\delta)$-approximation algorithms, where $\delta = \delta(\eps)$ is some positive constant. 
See also~\cite{RubinsteinW19} for more hardness of approximation results in fine-grained complexity theory.
\end{itemize}

%

\section{Curve Similarity and the \Fr Distance}
\label{sec:frechet}

We now turn to a different realm of applications.
For our purposes, a \emph{curve} is a sequence of points in the plane, that is, $\pi = (\pi_1,\ldots,\pi_n)$ with $\pi_i \in \mathbb{R}^2$. We call $n$ the \emph{length} of $\pi$.
A typical task is to judge the similarity of two given curves. Several distance measures have been proposed for this task, but the most classical and most popular in computational geometry is the \Fr distance\footnote{For simplicity, we focus on the \emph{discrete} \Fr distance~\cite{EiterMannila94} instead of the slightly more standard \emph{continuous} variant~\cite{AltG95}.}. For intuition, imagine a dog walking along curve $\pi$ and its owner walking along curve $\sigma$, connected by a leash. They start at the respective startpoints and end at their endpoints, and at any point in time either the dog advances to the next vertex along its curve, or the owner advances, or they both advance together. The shortest possible leash length admitting such a traversal is called the \Fr distance of $\pi$ and $\sigma$.

Formally, for curves $\pi=(\pi_1,\ldots,\pi_n)$ and $\sigma = (\sigma_1,\ldots,\sigma_m)$, a \emph{traversal} is a sequence $((i_1,j_1),\ldots,(i_T,j_T))$ such that $(i_1,j_1) = (1,1)$, $(i_T,j_T) = (n,m)$, and for every $1 \le t < T$ we have $(i_{t+1},j_{t+1}) \in \{(i_t+1,j_t), (i_t,j_t+1), (i_t+1,j_t+1)\}$. The (discrete) \Fr distance between $\pi$ and $\sigma$ is defined as
\[ d_F(\pi,\sigma) = \min_{((i_1,j_1),\ldots,(i_T,j_T))} \max_{1 \le t \le T} \| \pi_{i_t} - \sigma_{j_t} \|, \]
where the minimum goes over all traversals of $\pi$ and $\sigma$. 

The \Fr distance of two curves of length $n$ can be computed in time $O(n^2)$, by a simple dynamic programming algorithm that computes the \Fr distance of any prefix $(\pi_1,\ldots,\pi_i)$ of $\pi$ and any prefix $(\sigma_1,\ldots,\sigma_j)$ of $\sigma$~\cite{EiterMannila94}.

\smallskip
In the following, we first discuss the \Fr distance from the viewpoint of nearest neighbor search, and then we elaborate on the problem of computing the \Fr distance of two given curves.

\subsection{Nearest Neighbor Search under \Fr Distance}

We start with an embedding of vectors into curves, similar to Lemma~\ref{lem:embedEuclid}.

\begin{lemma}[Embedding Orthogonality into \Fr Distance~\cite{Bringmann14}] \label{lem:embedFr}
  There are functions $\mathcal{A}, \mathcal{B}$ mapping any $z \in \{0,1\}^d$ to a curve of length $d$ in the plane, such that $\langle a,b \rangle = 0$ if and only if $d_F(\mathcal{A}(a), \mathcal{B}(b)) \le 1$ for any $a,b \in \{0,1\}^d$. The functions $\mathcal{A},\mathcal{B}$ can be evaluated in time $O(d)$.
\end{lemma}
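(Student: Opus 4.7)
The plan is to mimic the Euclidean embedding of Lemma~\ref{lem:embedEuclid}, while using the $x$-coordinate of the plane as a ``ruler'' that forces any cheap \Fr traversal to be the diagonal one. Concretely, I would define $\mathcal{A}(a)=(p_1,\ldots,p_d)$ with $p_j:=(jM,\,1+2a_j)$ and $\mathcal{B}(b)=(q_1,\ldots,q_d)$ with $q_j:=(jM,\,2-2b_j)$, for some fixed constant $M$ (say $M=10$). Both are curves of length $d$ in the plane, and both maps are computable in time $O(d)$.

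The first step is to force the traversal to match indices exactly. Observe that $\|p_i-q_j\|\ge |i-j|\cdot M$, so whenever $i\ne j$ this distance is at least $M>1$. Since in a discrete \Fr traversal the leash is evaluated at each matched pair $(i_t,j_t)$, any traversal with leash $\le 1$ must satisfy $i_t=j_t$ at every step. Together with the boundary conditions $(i_1,j_1)=(1,1)$ and $(i_T,j_T)=(d,d)$, this uniquely forces the diagonal traversal $((1,1),(2,2),\ldots,(d,d))$, so we may replace the minimum over traversals by the single quantity $\max_{j}\|p_j-q_j\|$.

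The second step is the pointwise computation, which essentially replays Lemma~\ref{lem:embedEuclid} in one dimension: $\|p_j-q_j\|=|(1+2a_j)-(2-2b_j)|=|2(a_j+b_j)-1|$, which equals $1$ whenever $(a_j,b_j)\ne(1,1)$ and equals $3$ otherwise. Consequently $\max_{j}\|p_j-q_j\|\le 1$ holds if and only if no coordinate satisfies $a_jb_j=1$, i.e.\ if and only if $\langle a,b\rangle=0$. The threshold is therefore $\tau=1$, and this completes the argument. I do not expect a serious obstacle here: the construction is direct, and the only place one might worry is the diagonal-forcing claim, which is immediate for the \emph{discrete} \Fr distance because the leash is measured only at the vertex pairs chosen by the traversal, so any off-diagonal step $(i_t,j_t)$ with $i_t\ne j_t$ already incurs a leash of at least $M$.
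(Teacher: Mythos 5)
Your construction is essentially the same as the paper's: the paper sets $\pi_i = (3i,\, 1+2a_i)$ and $\sigma_i = (3i,\, 2-2b_i)$, i.e.\ uses $M=3$ for the $x$-ruler, and then argues exactly as you do that any off-diagonal step costs at least $3 > 1$, forcing the diagonal traversal, whose cost is $\max_i |2(a_i+b_i)-1|$. The only difference is your more conservative choice $M=10$, which changes nothing in the argument.
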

\begin{proof}
  For any $a \in \{0,1\}^d$ we construct the curve $\pi := \mathcal{A}(a)$ by setting $\pi_i := (3i, 1 + 2 a_i) \in \mathbb{R}^2$ for any $1 \le i \le d$. Similarly, for any $b \in \{0,1\}^d$ we construct $\sigma := \mathcal{B}(b)$ by setting $\sigma_i := (3i, 2 - 2b_i)$. Note that $\|\pi_i - \sigma_i\| = |2(a_i+b_i)-1|$, which evaluates to 3 if $a_i=b_i=1$ and to 1 otherwise. Moreover, for $i \ne j$ we have $\|\pi_i - \sigma_j\| \ge 3$. 
  
  Consider a traversal of $\pi$ and $\sigma$. If at some point the dog advances but not the owner (or the owner advances but not the dog), we get a distance of the form $\|\pi_i - \sigma_j\|$ for $i \ne j$, and thus the leash length must be at least 3. In the remaining case, the dog and its owner always advance together, meaning that at time $i$ the dog is at position $\pi_i$ and the owner is at position $\sigma_i$. This traversal has distance $\max_{1 \le i \le d} \|\pi_i - \sigma_i\| = \max_{1\le i \le d} |2(a_i+b_i)-1|$, which is 1 if $a,b$ are orthogonal, and 3 otherwise. Hence, $d_F(\pi,\sigma) \le 1$ holds if and only if $\langle a,b \rangle = 0$.
\end{proof}

Using this embedding, we can show lower bounds for nearest neighbor search among curves in the plane, analogously to the results for Euclidean nearest neighbor search from Section~\ref{sec:nearest} (the same proofs work almost verbatim). 
Specifically, in the problem Bichromatic Closest Pair under \Fr Distance we are given sets $P,Q$, each containing $n$ curves of length $d$ in the plane, and we want to compute the pair $(\pi,\sigma) \in P \times Q$ that minimizes the \Fr distance $d_F(\pi,\sigma)$. Naively, this can be solved in time $O(n^2 d^2)$.

\begin{theorem}[Lower Bound for Bichromatic Closest Pair under \Fr Distance] \label{thm:bcpfrechet}
  For any $\eps > 0$, Bichromatic Closest Pair under \Fr Distance cannot be solved in time $O(n^{2-\eps} \poly(d))$, unless OVH fails. 
\end{theorem}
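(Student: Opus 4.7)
The plan is to mirror the proof of Theorem~\ref{thm:bcp} almost verbatim, substituting the Fréchet embedding of Lemma~\ref{lem:embedFr} for the Euclidean embedding of Lemma~\ref{lem:embedEuclid}. Concretely, I would start from an OV instance $(A,B)$ with $|A|=|B|=n$ and dimension $d$, and map each vector to a curve of length $d$ in the plane via $P := \{\mathcal{A}(a) \mid a \in A\}$ and $Q := \{\mathcal{B}(b) \mid b \in B\}$. This construction takes time $O(nd)$, and preserves the parameters $n$ and $d$ up to constant factors.

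Next, I would invoke the correctness part of Lemma~\ref{lem:embedFr}: for any $a \in A$ and $b \in B$ we have $d_F(\mathcal{A}(a), \mathcal{B}(b)) \le 1$ if and only if $\langle a,b\rangle = 0$, and otherwise $d_F(\mathcal{A}(a), \mathcal{B}(b)) \ge 3$. Therefore the minimum Fréchet distance over pairs in $P \times Q$, returned by the hypothetical Bichromatic Closest Pair algorithm, is at most $1$ exactly when $(A,B)$ contains an orthogonal pair, and otherwise is at least $3$. Checking this threshold solves OV.

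Finally, I would do the running-time bookkeeping. A $O(n^{2-\eps}\poly(d))$ algorithm for Bichromatic Closest Pair under Fréchet distance, combined with the $O(nd)$ reduction, would solve OV in time $O(n^{2-\eps}\poly(d))$, contradicting Hypothesis~\ref{hyp:OV}. Since both the embedding and the parameter preservation are completely routine given Lemma~\ref{lem:embedFr}, there is essentially no obstacle in this theorem beyond the statement of the embedding lemma itself; the only mild care needed is to note that returning a pair achieving the minimum is at least as hard as deciding whether a pair of distance $\le 1$ exists, so the hardness transfers from the decision threshold used in the embedding to the optimization version stated in the problem.
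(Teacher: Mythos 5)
Your proposal is correct and is exactly what the paper intends: the paper explicitly says the Fréchet results follow from the Euclidean ones "almost verbatim" by swapping Lemma~\ref{lem:embedEuclid} for Lemma~\ref{lem:embedFr}, which is precisely what you do. The parameter bookkeeping and the note that optimization subsumes threshold decision are both correct and match the argument given for Theorem~\ref{thm:bcp}.
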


Similarly, in nearest neighbor data structures for the \Fr distance we can preprocess a given set $P$ consisting of $n$ curves of length $d$ in the plane, and then given a query curve $\sigma$ of length $d$ in the plane we want to find the curve $\pi \in P$ minimizing $d_F(\pi,\sigma)$.

\begin{theorem}[Lower Bound for Nearest Neighbor Data Structures under \Fr Distance] \label{thm:nnfrechet}
  For any $\eps, \beta > 0$, there is no data structure for nearest neighbor search under \Fr distance with preprocessing time $O(n^{\beta} \poly(d))$ and query time $O(n^{1-\eps} \poly(d))$, unless OVH fails.
\end{theorem}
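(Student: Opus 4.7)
The plan is to follow the proof of the Euclidean Nearest Neighbor Data Structures II theorem essentially verbatim, replacing the Euclidean embedding of Lemma~\ref{lem:embedEuclid} with the Fréchet embedding of Lemma~\ref{lem:embedFr}. The reduction is again from unbalanced OV (Lemma~\ref{lem:UOVH}), using a small-size set on the preprocessing side and a large-size set on the query side so that any polynomial preprocessing time can be absorbed.

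First I would fix $\eps, \beta > 0$ and assume for contradiction that a data structure for nearest neighbor search under the \Fr distance exists with preprocessing time $O(|P|^{\beta} \poly(d))$ and query time $O(|P|^{1-\eps} \poly(d))$. Then I would set $\alpha := 1/\beta$ and consider an unbalanced OV instance $(A,B)$ with $|B| = n$ and $|A| = \Theta(n^\alpha)$ in dimension $d$, whose hardness is guaranteed by Lemma~\ref{lem:UOVH}. Applying $\mathcal{A}$ of Lemma~\ref{lem:embedFr} vertex-wise gives a set $P := \{\mathcal{A}(a) \mid a \in A\}$ of $\Theta(n^\alpha)$ curves of length $d$ in the plane, and applying $\mathcal{B}$ gives a set $Q := \{\mathcal{B}(b) \mid b \in B\}$ of $n$ such curves; both sets are constructed in time $O((n+n^\alpha)\cdot d)$.

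Next I would run the preprocessing on $P$, which takes time $O(|P|^\beta \poly(d)) = O(n^{\alpha \beta} \poly(d)) = O(n \poly(d))$ by the choice $\alpha = 1/\beta$. Then I would query the data structure with every $\sigma \in Q$, so the total query time is
\[ O(|Q| \cdot |P|^{1-\eps} \poly(d)) = O(n \cdot n^{\alpha(1-\eps)} \poly(d)) = O(n^{1+\alpha-\eps'} \poly(d)), \]
for $\eps' := \alpha \eps > 0$. By Lemma~\ref{lem:embedFr}, the returned nearest neighbor of $\sigma = \mathcal{B}(b)$ in $P$ has \Fr distance $\le 1$ if and only if there exists some $a \in A$ with $\langle a,b \rangle = 0$; thus aggregating the $|Q|$ query answers tells us whether $(A,B)$ contains an orthogonal pair. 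This solves unbalanced OV in total time $O(n^{1+\alpha-\eps'} \poly(d))$, contradicting Lemma~\ref{lem:UOVH}.

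Since the reduction is structurally identical to the Euclidean case and the Fréchet embedding from Lemma~\ref{lem:embedFr} already has exactly the same quantitative behavior as the Euclidean one (linear-time evaluation, preserved dimension parameter $d$, correct threshold-based characterization of orthogonality), there is no real obstacle. The only thing to check carefully is the bookkeeping of the exponents: that $\alpha\beta = 1$ cancels the preprocessing blow-up to exactly $O(n \poly(d))$, and that $\eps' = \alpha \eps$ is a fixed positive constant so the final bound $O(n^{1+\alpha-\eps'} \poly(d))$ genuinely violates the unbalanced OVH lower bound of $\Omega(n^{1+\alpha-\eps'})$ up to subpolynomial factors in~$d$.
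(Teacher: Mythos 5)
Your proposal is correct and matches the paper's approach exactly: the paper explicitly states that Theorems~\ref{thm:bcpfrechet} and~\ref{thm:nnfrechet} follow by running the Euclidean arguments verbatim with Lemma~\ref{lem:embedFr} in place of Lemma~\ref{lem:embedEuclid}, which is precisely what you did, including the unbalanced OV reduction via Lemma~\ref{lem:UOVH} and the choice $\alpha = 1/\beta$.
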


\subsection{Computing the \Fr Distance}


A classic dynamic programming algorithm computes the \Fr distance between two curves of length $n$ in time $O(n^2)$~\cite{EiterMannila94}. A breakthrough result from '14 shows a tight lower bound ruling out time $O(n^{2-\eps})$ under OVH~\cite{Bringmann14}. This result paved the way for tight lower bounds for many other dynamic programming problems (mostly outside of computational geometry, see, e.g.,~\cite{AbboudBW15,BringmannK15}). Here we give a very brief sketch of this result.

\begin{theorem}[Lower Bound for \Fr Distance~\cite{Bringmann14}] \label{thm:frechethardness}
  For any $\eps > 0$, the \Fr distance cannot be computed in time $O(n^{2-\eps})$.
\end{theorem}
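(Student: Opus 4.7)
The plan is to reduce OV to the computation of the \Fr distance between \emph{two} curves (rather than to Bichromatic Closest Pair among $2n$ curves, as in Theorem~\ref{thm:bcpfrechet}). Concretely, given an OV instance $(A,B)$ of size $n$ in dimension $d = n^{o(1)}$, I would construct planar curves $\pi,\sigma$ of total length $N = O(n\cdot \poly(d))$ such that $d_F(\pi,\sigma) \le 1$ if and only if some pair $(a,b)\in A\times B$ is orthogonal. A hypothetical $O(N^{2-\eps})$-time \Fr algorithm would then solve OV in time $O(n^{2-\eps}\poly(d))$, contradicting OVH (the hypothesis implicit in the theorem).

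The technical heart is to upgrade the per-vector embedding of Lemma~\ref{lem:embedFr} using two logic gadgets. An \emph{AND-gadget} concatenates sub-pairs $(\gamma_i,\delta_i)$ through well-separated synchronization points that force any leash-$1$ traversal to cross them in lockstep; its combined \Fr distance is $\le 1$ iff every sub-distance is. An \emph{OR-gadget} lays out several candidate sub-curves on one side near a single ``waiting'' point on the opposite side, so that the partner can sit motionless while the other traverses and cheaply discards the candidates it does not pick; its combined distance is $\le 1$ iff at least one sub-distance is. Each gadget needs an explicit planar realization with only $O(1)$ overhead per sub-curve.

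With these in hand, the reduction encodes the formula $\exists a\in A\;\exists b\in B\colon \langle a,b\rangle = 0$ by nesting ORs over $A$ and $B$ around an instance of Lemma~\ref{lem:embedFr}, for instance by letting $\pi$ be an OR-gadget over $A$ and $\sigma$ an OR-gadget over $B$, arranged so that the orthogonality test $d_F(\mathcal{A}(a),\mathcal{B}(b)) \le 1$ happens precisely at the branches ``chosen'' on the two sides. The main obstacle is that the two OR-gadgets must interact symmetrically: while one side commits to its choice, the geometry on the other side must not impose spurious constraints, and vice versa. Realizing this simultaneous compatibility in only two coordinates, while keeping the total length $O(n\poly(d))$, is where the careful construction of~\cite{Bringmann14} is needed, and this is the step I would expect to consume essentially all of the work. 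Once the gadgets are in place, the OVH contradiction is routine via the $N = n^{1+o(1)}$ calculation.
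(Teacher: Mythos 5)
Your proposal follows the same strategy as the paper's sketch: reduce from OV to a single \Fr computation on curves of length $N = O(n\,\poly(d))$ by composing coordinate-wise ``AND'' vector gadgets (your AND-gadget is the paper's vector gadget $VG(\cdot)$, with lockstep enforced by well-separated $x$-coordinates) with an OR layer over $A$ and $B$, then invoke OVH via $N = n^{1+o(1)}$. The one point worth noting is that the obstacle you flag -- making two OR-gadgets interact symmetrically -- is sidestepped in \cite{Bringmann14} by an \emph{asymmetric} OR construction: $\pi$ repeats a pattern $(s, VG(a), t)$ for each $a \in A$ while $\sigma$ lays out all $VG(b)$ in a single row bracketed by $s, s^*$ and $t^*, t$, which is precisely what lets one side ``park'' while the other skips unwanted branches.
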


\noindent
\emph{Proof Sketch.} 
Given an OV instance $(A,B)$ on $n$ vectors in dimension $d$, we construct two curves $\pi,\sigma$ of length $N = O(nd)$ such that $d_F(\pi,\sigma) \le 1$ if and only if $(A,B)$ contains an orthogonal pair. It then follows that if the \Fr distance can be computed in time $O(N^{2-\eps})$, then OV can be solved in time $O((nd)^{2-\eps}) = O(n^{2-\eps} \poly(d))$, contradicting OVH (Hypothesis~\ref{hyp:OV}). 

To construct the curves $\pi,\sigma$, we start with \emph{vector gadgets}. These gadgets are similar to the embedding in Lemma~\ref{lem:embedFr}, but they are restricted to a much smaller region in space. Specifically, for each vector $a \in A$ we construct a curve $VG(a)$ as the sequence of points 
$( (-1)^i \delta, 0.5 - (-1)^{a_i} \delta^2 ) \in \mathbb{R}^2$ for $1 \le i \le d$, where $\delta > 0$ is a small constant.
Similarly, for each vector $b \in B$ we construct a curve $VG(b)$ as the sequence of points $( (-1)^i \delta, -0.5 + (-1)^{b_i} \delta^2 )$ for $1 \le i \le d$. Analogously to Lemma~\ref{lem:embedFr}, we can show that $d_F(VG(a),VG(b)) \le 1$ if and only if $\langle a,b \rangle = 0$.

The final and most complicated step of the reduction is the \emph{OR gadget}. This gadget combines the curves $VG(a)$ for all $a \in A$ into one curve $\pi$, and similarly it combines the curves $VG(b)$ for all $b \in B$ into one curve $\sigma$, such that $d_F(\pi,\sigma) \le 1$ if and only if there exist $a \in A, b\in B$ with $d_F(VG(a),VG(b)) \le 1$. To this end, we introduce auxiliary points at the following positions:
\[ s = (-0.5, 0),\; t = (0.5,0),\; s^* = (-0.5,-1),\; t^* = (0.5, 1). \]
The final curve $\pi$ repeats the pattern $(s, VG(a), t)$ for all $a \in A$. 
The final curve~$\sigma$ starts with $s$ and $s^*$, then walks through all vector gadgets $VG(b)$, and ends with $t^*$ and $t$. 
One can show that these curves satisfy 
$d_F(\pi,\sigma) \le 1$ if and only if $(A,B)$ contains an orthogonal pair, for details see~\cite{Bringmann14}.


%

\subsection{Further Results on \Fr Distance}

\begin{itemize}
\item \emph{Robustness:}
For reductions to geometric problems a common concern is the precision needed to write down the constructed instances. The reductions shown in this article are very robust: they only require $O(\log d)$-bit coordinates, and some can even be made to work with $O(1)$-bit coordinates.

\item \emph{Hardness of Approximation:}
Inspecting the proof of Lemma~\ref{lem:embedFr}, we see that it is hard to distinguish \Fr distance at most 1 or at least 3. Therefore, Theorems~\ref{thm:bcpfrechet} and~\ref{thm:nnfrechet} even hold against multiplicative $2.999$-approximation algorithms. For approximation algorithms we refer to~\cite{BringmannM16,ChanR18}.

\item \emph{One-dimensional Curves:}
We showed hardness for curves in the plane. The same results hold for one-dimensional curves, of the form $\pi = (\pi_1,\ldots,\pi_d)$ with $\pi_i \in \mathbb{R}$, see~\cite{BringmannM16,BuchinOS19}.

\item \emph{Continuous and Weak Variants:} The same lower bounds as in Theorems~\ref{thm:bcpfrechet},~\ref{thm:nnfrechet}, and \ref{thm:frechethardness} also hold for other standard variants of the \Fr distance~\cite{Bringmann14,BuchinOS19}.

\item \emph{Realistic Input Curves:}
In order to avoid the quadratic worst-case complexity, geometers have studied several models of realistic input curves. For example, on so-called \emph{$c$-packed} curves the \Fr distance can be $(1+\eps)$-approximated in time $\tilde{O}(c n / \sqrt{\eps})$~\cite{DriemelHW12,BringmannK17}, which matches a conditional lower bound~\cite{Bringmann14}.

\item \emph{Logfactor Improvements:}
Lower bounds under OVH rule out polynomial improvements of the form $O(n^{2-\eps})$. What about logfactor improvements? An algorithm running in time $O(n^2 \log \log n / \log n)$ is known~\cite{AgarwalAKS14}. Can we improve this to time $O(n^2 / \log^{100} n)$? 
Such an improvement was shown to be unlikely, as it would imply new circuit lower bounds~\cite{AbboudB18}.
\end{itemize}

\section{More Fine-Grained Computational Geometry}

In this article we focused on nearest neighbor search and the \Fr distance. Further work on fine-grained complexity in computational geometry includes conditional lower bounds for a variant of \Fr distance between $k$ curves~\cite{buchin2016fine}, 
the dynamic time warping distance~\cite{BringmannK15,AbboudBW15},
the \Fr distance under translation~\cite{BringmannKN19} and Hausdorff distance under translation~\cite{BringmannN21},
curve simplification under \Fr distance~\cite{BringmannC19,buchin2016fine}, and
Maximum Weight Rectangle~\cite{BackursDT16}.



\bibliographystyle{splncs04}
\begin{footnotesize}

\end{footnotesize}

\end{document}